\long\def\comment#1{}
\newif\ifpreprint
\newtheorem{theorem}{Theorem}
\newtheorem{assumption}{Assumption}
\newtheorem{definition}{Definition}
\newtheorem{lemma}{Lemma}
\newtheorem{remark}{Remark}
\newcommand{\VV}{\mathbb{V}}
\newcommand{\WW}{\mathbb{W}}
\newcommand{\RR}{\mathbb{R}}
\newcommand{\CC}{\mathbb{C}}
\newcommand{\NN}{\mathbb{N}}
\newcommand{\JJ}{\mathcal{J}}
\newcommand{\LL}{\mathcal{L}}
\newcommand{\UU}{\mathcal{U}}
\newcommand{\CCC}{\mathcal{C}}
\newcommand{\RRR}{\mathcal{R}}
\newcommand{\Ht}{h_2}
\def\NoNumber#1{{\def\alglinenumber##1{}\State #1}\addtocounter{ALG@line}{-1}}
\newcommand{\norm}[1]{\left\lVert#1\right\rVert}
\renewcommand{\algorithmicrequire}{\textbf{Input:}}
\renewcommand{\algorithmicensure}{\textbf{Output:}}
\algnewcommand\algorithmicswitch{\textbf{switch}}
\algnewcommand\algorithmiccase{\textbf{case}}
\algnewcommand\algorithmicassert{\texttt{assert}}
\algnewcommand\Assert[1]{\State \algorithmicassert(#1)}%
\def\BibTeX{{\rm B\kern-.05em{\sc i\kern-.025em b}\kern-.08em
    T\kern-.1667em\lower.7ex\hbox{E}\kern-.125emX}}
\title{\LARGE \bf
State dimension reduction of recurrent equilibrium networks\\
with contraction and robustness preservation
}
\author{M.F. Shakib
\thanks{M.F. Shakib is with the Department of Electrical and Electronic Engineering, Imperial College London, London, UK {\tt (m.shakib@imperial.ac.uk)}.}%
}
\begin{document}

\maketitle
\thispagestyle{empty}
\pagestyle{empty}

\begin{abstract}
Recurrent equilibrium networks (RENs) are effective for learning the dynamics of complex dynamical systems with certified contraction and robustness properties through unconstrained learning.
While this opens the door to learning large-scale RENs, deploying such large-scale RENs in real-time applications on resource-limited devices remains challenging.
Since a REN consists of a feedback interconnection of linear time-invariant (LTI) dynamics and static activation functions, this article proposes a projection-based approach to reduce the state dimension of the LTI component of a trained REN.
One of the two projection matrices is dedicated to preserving contraction and robustness by leveraging the already-learned REN contraction certificate.
The other projection matrix is iteratively updated to improve the accuracy of the reduced-order REN based on necessary $h_2$-optimality conditions for LTI model reduction.
Numerical examples validate the approach, demonstrating significant state dimension reduction with limited accuracy loss while preserving contraction and robustness.
\end{abstract}
\vspace{-3mm}
\section{INTRODUCTION}

The increasing complexity of engineering systems has led to a corresponding rise in the complexity of dynamical models used for their design, control, and operation. 
This trend has fuelled the development of data-driven models that employ \emph{artificial neural networks} to learn underlying system dynamics~\cite{ljung2020deep,beintema2023deep}. 
However, such models often suffer from excessive over-parameterizations~\cite{pillonetto2025deep}, both in terms of the number of states and network complexity, while lacking robustness guarantees for deviations from training data. 
These limitations present significant challenges related to memory efficiency, computational tractability, and generalizability.

\emph{Recurrent Equilibrium Networks} (RENs)~\cite{revay2023recurrent} provide an expressive framework for learning dynamical models. 
RENs generalize deep, recurrent, and convolutional neural networks while inherently ensuring contraction~\cite{lohmiller1998contraction} and robustness, such as, e.g., a bounded incremental $\ell_2$-gain, as certified through \emph{incremental integral quadratic constraints} (IQC)~\cite{megretski2002system}. 
These robustness guarantees are certified by a \emph{contraction metric certificate}, which is learned as an auxiliary parameter during the training phase. 
Thanks to these properties, RENs have gained traction in diverse applications, including system identification\cite{revay2023recurrent} and control policy synthesis~\cite{revay2023recurrent,Wang2023Learning}. 
Unlike constrained optimization approaches~\cite{tobenkin2017,shakib2022computationally,shakib2023kernel}, RENs enable \emph{unconstrained} training through \emph{direct parametrizations}~\cite{revay2023recurrent}, while still guaranteeing contraction and robustness, making them well-suited for learning large-scale dynamical models with built-in contraction and robustness guarantees.

\begin{figure}
    \centering
    \includegraphics[width=0.95\linewidth]{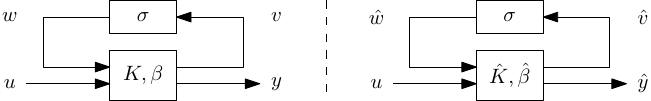}
    \caption{Left: a large-scale REN~\eqref{eq:system} with the LTI block~\eqref{eq:system:LTI} parameterized by the weights $K$ and biases $\beta$, and $\sigma$ the nonlinear block~\eqref{eq:system:NL}. Right: A reduced-order REN with weights $\hat W$ and biases $\hat \beta$, and the same nonlinear block~$\sigma$.}
    \label{fig:REN_schematic}
    \ifpreprint{}
    \else{\vspace{-0.3cm}}\fi
\end{figure}

While such training enables scalable learning of large-scale RENs, these models can be intractable for real-time use. 
Given a trained REN, this article proposes an \emph{order reduction} framework that \emph{preserves} the contraction and robustness. 
Since a REN is an interconnection of LTI dynamics and static nonlinearities, we focus on projection-based reduction of the LTI part, see Fig.~\ref{fig:REN_schematic}, which uses two projection matrices~\cite{antoulas2005approximation}. 
Our first contribution is a novel selection strategy that preserves contraction and robustness: one projection matrix is free, the other is fixed from the pre-learned contraction metric. 
This idea is inspired by moment-matching reduction for Lur'e-type systems~\cite{shakib2023model,Shakib2024optimal} and recent LTI methods~\cite{cheng2023optimal,shakib2025dissipativity}, but avoids large-scale semi-definite programs by exploiting the existing REN certificate.
The proposed approach is also applicable to the recently introduced R2DN class~\cite{barbara2025r2dnscalableparameterizationcontracting}, which also enjoys built-in contraction and robustness properties.

The second contribution is an iterative algorithm that updates the free projection matrix to improve the accuracy of the reduced-order REN. 
Since the accuracy of the reduced-order REN is inherently tied to the accuracy of its reduced-order LTI component, the algorithm uses the $h_2$-optimal LTI reduction ideas~\cite{gugercin2008h_2,bunse2010h2}, akin to IRKA~\cite{gugercin2008h_2}, and related extensions~\cite{xu2011optimal,bunse2010h2}.
In line with these existing methods, our iterative algorithm has no formal convergence guarantee.
However, unlike them, it preserves the contraction and robustness properties of the reduced REN throughout the iterations.
Numerical results confirm that the proposed method substantially reduces state dimension while maintaining accuracy and certified contraction and robustness.

The main contributions of this article are as follows:
\begin{itemize}
    \item \textbf{Contraction- and robustness-preserving projection:} a framework that preserves these properties of the large-scale REN for the reduced-order REN.
    \item \textbf{Iterative algorithm:} a scalable algorithm that combines LTI $\Ht$-optimal model reduction with the contraction- and robustness-preserving projection framework.
    \item \textbf{Numerical validation:} Examples demonstrating substantial state dimension reduction while preserving contraction, robustness, and accuracy.
\end{itemize}

The remainder of this paper is organized as follows. Section~\ref{sec:preliminaries} introduces the preliminaries on RENs.
Section~\ref{sec:main} presents the main theoretical contributions.
Section~\ref{sec:example} provides numerical examples.
Section~\ref{sec:conclusions} presents the main conclusions and potential future research directions.
\ifpreprint{}
\else{
The proofs are omitted for brevity and can be found in~\cite{shakib2025state}.}
\fi

\textbf{Notation:}
The symbols $\RR, \CC$, and $\NN$ denote the set of real, complex, and natural numbers, respectively.
For~$A \in \RR^{n \times m}$, $A^\top$ denotes its transpose and col$(A)$ denotes its column space, while for $A \in \CC^{n \times m}$, $A^*$ denotes its Hermitian transpose.
The notation $\text{blkdiag}(A_1,A_2)$ denotes a block-diagonal matrix of submatrices $A_1$ and $A_2$.
For a symmetric matrix $A \in \RR^{n \times n}$, $A \succ 0$ $(A \prec 0)$ denotes that $A$ is positive (negative) definite, and $A \succeq 0$ $(A \preceq 0)$ denotes that $A$ is positive (negative) semi-definite.
The set of positive-definite diagonal matrices is denoted by~\( \mathbb{D}_+ \).
The set~\( \ell^n_{2e} \) denotes the set of sequences~\( x : \mathbb{N} \to \mathbb{R}^n \), where \( x_t \in \mathbb{R}^n \) is the value of the sequence at time \( t \in \mathbb{N} \) and we omit $n$ when the dimension is clear from the context.
The subset~\( \ell_2 \subset \ell_{2e} \) consists of all sequences that satisfy $\|x\| < +\infty$ with $\|x\| \coloneq \sqrt{\sum_{t=0}^{\infty} |x_t|^2}$ and \( |(\cdot)| \) the Euclidean norm.
The norm of a truncated sequence~\( x \in \ell_{2e} \) over \([0, T]\) is defined by $\|x\|_T := \sqrt{\sum_{t=0}^{T} |x_t|^2}.$
For two sequences~\( x, y \in \ell^n_{2e} \), the inner product over \([0, T]\) is defined as~$\langle x, y \rangle_T \coloneqq \sum_{t=0}^{T} x_t^T y_t$.
A set is closed under conjugation if it contains the complex conjugate of each of its elements.

\section{Recurrent equilibrium networks}\label{sec:preliminaries}
This section first recalls the REN model class in Section~\ref{S:systemclass}.
After that, the notions of \emph{contraction} and \emph{robustness} are recalled in Section~\ref{S:CR-REN}.
Finally, conditions for characterizing these notions are outlined in Section~\ref{S:conditions_for_CR_RENs}.

\subsection{Model class}\label{S:systemclass}
A discrete-time recurrent equilibrium network is represented by the following dynamical system:
\begin{subequations}
\label{eq:system}
\begin{align}\label{eq:system:LTI}
\begin{bmatrix}
    x_{t+1} \\
    v_t \\
    y_t
\end{bmatrix}
&=
\overbrace{
\begin{bmatrix}
    A & B_1 & B_2 \\
    C_1 & D_{11} & D_{12} \\
    C_2 & D_{21} & D_{22}
\end{bmatrix}}^{K\in \RR^{(n+q+p)\times(n+q+m)}}
\begin{bmatrix}
    x_t \\
    w_t \\
    u_t
\end{bmatrix}
+\!\!\!\!
\overbrace{
\begin{bmatrix}
    \beta_x \\
    \beta_v \\
    \beta_y
\end{bmatrix}}^{\beta\in\RR^{n+q+p}}
\!\!\!\!,
\\ \label{eq:system:NL}
    w_t &= \sigma(v_t) := 
    \begin{bmatrix}
        \sigma(v_t^1) & \sigma(v_t^2) & \dots & \sigma(v_t^q)
    \end{bmatrix}^{\top},
\end{align}
\end{subequations}
where, at time $t\in\NN$, $x_t\in\RR^n$ is the state, $u_t\in\RR^m$ is the input, $y_t\in\RR^p$ is the output, and $v_t,w_t\in\RR^{q}$ are internal signals.
The state dimension is denoted by $n$ and the number of neurons is denoted by $q$. 
The matrix $K \in \RR^{(n+q+p)\times(n+q+m)}$ contains the \emph{weights} of the REN with its sub-matrices $A,\ldots, D_{22}$ having appropriate dimensions, while $\beta\in\RR^{n+q+p}$ contains the \emph{biases} of the REN with $\beta_x \in \RR^n, \beta_v\in\RR^q$, and $\beta_y \in \RR^p$.
The internal signals~$v$ and~$w$ are connected through the nonlinear activation function~$\sigma$, see~\eqref{eq:system:NL}, which operates element-wise on~$v$.

A REN~\eqref{eq:system} is a feedback interconnection between the LTI block~\eqref{eq:system:LTI} (where~$\beta$ is seen as an external input to this LTI block) and the activation functions~\eqref{eq:system:NL}, see Fig.~\ref{fig:REN_schematic}.
To assess the contraction and robustness of the REN, conditions on the block (given in the next section) and the nonlinear block (summarized next) can be assessed separately.

\begin{assumption}\label{ass:incr_secetor_condition}
    The nonlinearity $\sigma$ satisfies the 
    following incremental
    condition:
    \begin{equation}\label{eq:incr_sector_condition}
        0 \le \frac{\sigma(y)-\sigma(x)}{y-x} \le 1, \forall (x,y)\in\RR, x\neq y.
    \end{equation}
\end{assumption}
\vspace{1mm}

Many of the commonly used activation functions, such as, e.g., the ReLu and tanh functions, satisfy Assumption~\ref{ass:incr_secetor_condition}~\cite{revay2023recurrent}.

The matrix $D_{11}$ in~\eqref{eq:system} gives rise to the implicit equation\footnote{A REN is well-posed if the implicit equation $v = C_1x+D_{11}\sigma(v) + D_{12}u+\beta_v$ has a unique solution $v\in\RR^q$ for any $x\in\RR^n$ and $u\in\RR^m$.} $v = C_1x + D_{11}\sigma(v) + D_{12}u + \beta_v$ in $v$, enhances the representation capabilities of the REN by introducing depth and recurrent behavior~\cite{revay2023recurrent}. 
A special case arises when $D_{11}$ is lower triangular, reducing the network to a feedforward network~\cite{revay2023recurrent}. 
In this case, the $i$-th element of $v$ depends only on its preceding elements~$i-1,i-2,\ldots$, allowing for a row-by-row computation and eliminating the implicit nature. 
Nevertheless, the results presented in this article hold for both lower-triangular and full matrices $D_{11}$.

\subsection{Contracting and robust RENs} \label{S:CR-REN}

This article focuses on contracting and robust RENs. 
The notion of \emph{contraction} is recalled first from~\cite{revay2023recurrent}.

\vspace{1mm}
\begin{definition}[\protect{Contraction~\cite[Definition~2]{revay2023recurrent}}]
\label{def:contraction}
    A REN~\eqref{eq:system} is said to be contracting with rate $\alpha \in (0,1)$ if for any two initial conditions $x_0^a, x_0^b \in \RR^n$, given the same input sequence $u \in \ell_{2e}^m$, the state sequences $x^a$ and $x^b$ satisfy $\left|x_t^a - x_t^b\right| \le \tau \alpha^t \left|x_0^a -x_0^b \right|,$
    for some $\tau > 0$.\hfill$\square$
\end{definition}
\vspace{1mm}

A contracting system `forgets' its initial condition as all solutions with different initial conditions contract to each other.
A REN can also exhibit \emph{robustness} in terms of the sensitivity of the REN solutions with respect to changes in the input and the initial conditions via incremental IQCs.
To introduce this concept, denote by $\RRR_{x_0}(u)$ the output sequence that corresponds to the initial condition $x_0\in\RR^n$ and the input $u  \in \ell_{2e}^m$ such that~\eqref{eq:system} is~satisfied. 

\begin{definition}\label{def:robustness}
[\protect{Robustness~\cite[Definition~3]{revay2023recurrent}}]
A REN~\eqref{eq:system} is said to satisfy the \textit{incremental IQC} defined by \((Q, S, R)\), where \( 0 \succeq Q \in \mathbb{R}^{p \times p} \), \( S \in \mathbb{R}^{m \times p} \), and \( R = R^\top \in \mathbb{R}^{m \times m} \), if for all pairs of solutions with initial conditions \( a, b \in \mathbb{R}^n \) and input sequences \( u^a, u^b \in \ell^m_{2e} \), the output sequences \( y^a = \RRR_a(u^a) \) and \( y^b = \RRR_b(u^b) \) satisfy
\begin{equation}\label{eq:IQC_ineq}
\hspace{-1mm}
    \sum_{t=0}^{T}
    \begin{bmatrix}
        y_t^a - y_t^b \\
        u^a_t - u^b_t
    \end{bmatrix}^\top
    \begin{bmatrix}
        Q & S^\top \\
        S & R
    \end{bmatrix}
    \begin{bmatrix}
        y_t^a - y_t^b \\
        u^a_t - u^b_t
    \end{bmatrix}
    \geq -d(a, b), \, \forall \, T,
\end{equation}
for some function \( d(a, b) \geq 0 \) with \( d(a, a) = 0 \).\hfill$\square$
\end{definition}

Next, we recall conditions for contraction and robustness.

\subsection{Conditions for contraction and robustness} \label{S:conditions_for_CR_RENs}
The characterizations of contraction (Definitions~\ref{def:contraction}) and robustness (Definition~\ref{def:robustness}), are based on Assumption~\ref{ass:incr_secetor_condition} and matrix inequality conditions presented next.

\vspace{1mm}
\begin{theorem}[~\!\!\!\!\protect{\cite[Theorem~1]{revay2023recurrent}}]\label{thm:LMI_cond}
Consider the REN~\eqref{eq:system} satisfying Assumption~\ref{ass:incr_secetor_condition} and a given \(\bar{\alpha} \in (0,1]\).

\begin{enumerate}
    \item \textbf{Contracting REN:} Suppose there exists \(P = P^{\top} \succ 0\) and \(\Lambda \in \mathbb{D}_{+}\) such that
    \begin{equation}\label{eq:cond:contraction}
    \begin{bmatrix}
    \bar{\alpha}^{2}P & -C_{1}^{\top}\Lambda \\
    -\Lambda C_{1} & Y
    \end{bmatrix}
    -
    \begin{bmatrix}
    A^{\top} \\
    B_{1}^{\top}
    \end{bmatrix}
    P
    \begin{bmatrix}
    A^{\top} \\
    B_{1}^{\top}
    \end{bmatrix}^{\top}
    \succ 0,
    \end{equation}
    where \(Y \coloneqq 2\Lambda - \Lambda D_{11} - D_{11}^{\top}\Lambda\). Then, the REN is well-posed and contracting with some rate \(\alpha < \bar{\alpha}\).

    \item \textbf{Robust REN:} Consider the incremental IQC defined in~\eqref{eq:IQC_ineq} with \((Q, S, R)\) given, where \(Q \preceq 0\). Suppose there exist \(P = P^{\top} \succ 0\) and \(\Lambda \in \mathbb{D}_{+}\) such that
    \begin{multline}\label{eq:cond:IQC}
    \!\!\!\!\!\!\!\!
    \begin{bmatrix}
    \bar{\alpha}^{2}P & -C_{1}^{\top}\Lambda & C_{2}^{\top}S^{\top} \\
    -\Lambda C_{1} & Y & D_{21}^{\top}S^{\top} - \Lambda D_{12} \\
    SC_{2} & SD_{21} - D_{12}^{\top}\Lambda & R + SD_{22} + D_{22}^{\top}S^{\top}
    \end{bmatrix}
    \\
    -
    \begin{bmatrix}
    A^{\top} \\
    B_{1}^{\top} \\
    B_{2}^{\top}
    \end{bmatrix}
    P
    \begin{bmatrix}
    A^{\top} \\
    B_{1}^{\top} \\
    B_{2}^{\top}
    \end{bmatrix}^{\top}
    +
    \begin{bmatrix}
    C_{2}^{\top} \\
    D_{21}^{\top} \\
    D_{22}^{\top}
    \end{bmatrix}
    Q
    \begin{bmatrix}
    C_{2}^{\top} \\
    D_{21}^{\top} \\
    D_{22}^{\top}
    \end{bmatrix}^{\top}
    \succ 0.
    \end{multline}
    Then, the REN is well-posed, satisfies~\eqref{eq:IQC_ineq}, and is contracting with some rate $\alpha < \bar \alpha$. \hfill $\blacksquare$
\end{enumerate}
\end{theorem}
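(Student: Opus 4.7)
The plan is to build an incremental quadratic Lyapunov-like function $V_t = \Delta x_t^\top P \Delta x_t$ along pairs of trajectories $(x^a,w^a,v^a,y^a)$ and $(x^b,w^b,v^b,y^b)$ of~\eqref{eq:system}, and to use Assumption~\ref{ass:incr_secetor_condition} to turn the LMIs~\eqref{eq:cond:contraction}--\eqref{eq:cond:IQC} into sign-definite dissipation inequalities. The crucial observation is that, component-wise, $\Delta w_t^i(\Delta v_t^i - \Delta w_t^i) \ge 0$ for $\Delta w_t = \sigma(v_t^a)-\sigma(v_t^b)$ and $\Delta v_t = v_t^a - v_t^b$; summing with any $\Lambda \in \mathbb{D}_+$ yields the incremental sector bound $2\Delta w_t^\top \Lambda(\Delta v_t - \Delta w_t) \ge 0$, which is exactly the slack the $Y$-block in~\eqref{eq:cond:contraction}--\eqref{eq:cond:IQC} is engineered to absorb.

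For Part~1, I would form the incremental relations $\Delta x_{t+1} = A\Delta x_t + B_1\Delta w_t$ and $\Delta v_t = C_1\Delta x_t + D_{11}\Delta w_t$ (with $u^a=u^b$), then sandwich~\eqref{eq:cond:contraction} by $[\Delta x_t^\top, \Delta w_t^\top]^\top$ and its transpose. Using $Y = 2\Lambda - \Lambda D_{11} - D_{11}^\top \Lambda$, the subtracted outer-product term contributes $-\Delta x_{t+1}^\top P \Delta x_{t+1}$ and the cross plus $Y$ blocks collapse to $-2\Delta w_t^\top \Lambda(\Delta v_t - \Delta w_t)$, so the strict LMI reads $\bar{\alpha}^2 V_t - V_{t+1} - 2\Delta w_t^\top \Lambda(\Delta v_t - \Delta w_t) > 0$. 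Dropping the non-negative sector term gives $V_{t+1} < \bar{\alpha}^2 V_t$, and iterating together with the eigenvalue bounds on $P$ yields Definition~\ref{def:contraction} with $\tau = \sqrt{\lambda_{\max}(P)/\lambda_{\min}(P)}$ and some $\alpha < \bar{\alpha}$.

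For Part~2, the same manipulation is applied to~\eqref{eq:cond:IQC} but with $[\Delta x_t^\top, \Delta w_t^\top, \Delta u_t^\top]^\top$ and $\Delta v_t = C_1\Delta x_t + D_{11}\Delta w_t + D_{12}\Delta u_t$. Substituting $\Delta y_t = C_2\Delta x_t + D_{21}\Delta w_t + D_{22}\Delta u_t$, the blocks containing $S$ assemble into $2\Delta u_t^\top S\Delta y_t$, the outer-product $Q$-block into $\Delta y_t^\top Q\Delta y_t$, and the $R$-block into $\Delta u_t^\top R\Delta u_t$, yielding $V_{t+1} - \bar{\alpha}^2 V_t < s_t - 2\Delta w_t^\top \Lambda(\Delta v_t - \Delta w_t)$ with $s_t$ the IQC supply of Definition~\ref{def:robustness}. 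Dropping the non-negative sector slack, summing from $0$ to $T$ and using $V_{T+1}\ge 0$ together with $\bar{\alpha}\le 1$ delivers~\eqref{eq:IQC_ineq} with $d(a,b) = (a-b)^\top P (a-b)$; specialising to $\Delta u_t\equiv 0$ and using $Q\preceq 0$ recovers $V_{t+1}<\bar{\alpha}^2 V_t$ and hence contraction.

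The step I expect to demand genuine care, rather than routine LMI algebra, is \emph{well-posedness} of the implicit equation $v = C_1 x + D_{11}\sigma(v) + D_{12}u + \beta_v$, since the Lyapunov argument tacitly assumes such a $v$ exists and is unique for each $(x,u)$. I would extract this from the $(2,2)$ principal block of~\eqref{eq:cond:contraction} (and its counterpart in~\eqref{eq:cond:IQC}), which forces $Y \succ B_1^\top P B_1 \succeq 0$ and hence $Y\succ 0$; combined with the monotonicity and unit-slope Lipschitz property of $\sigma$ granted by Assumption~\ref{ass:incr_secetor_condition}, a Banach fixed-point / monotone-operator argument in a $\Lambda$-weighted norm then produces a unique solution $v$ for every $(x,u)$, validating the above Lyapunov inequalities pathwise.
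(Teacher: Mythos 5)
This theorem is not proved in the paper at all: it is recalled verbatim from \cite[Theorem~1]{revay2023recurrent}, so there is no in-paper proof to compare against. Your argument is, in outline, exactly the standard incremental-dissipativity proof of that reference: left/right multiply \eqref{eq:cond:contraction} (resp.\ \eqref{eq:cond:IQC}) by the stacked incremental vector, use the slope restriction of Assumption~\ref{ass:incr_secetor_condition} with $\Lambda \in \mathbb{D}_+$ to drop the slack $2\Delta w_t^\top \Lambda(\Delta v_t-\Delta w_t)\ge 0$, obtain $V_{t+1}<\bar\alpha^2 V_t$ or the dissipation inequality with supply rate $(Q,S,R)$ and storage $d(a,b)=(a-b)^\top P(a-b)$, and derive well-posedness from the $(2,2)$ block forcing $Y\succ 0$ together with a monotone-operator/fixed-point argument for the implicit equation --- all of which is sound, and you correctly identify well-posedness as the only step needing more than LMI algebra. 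Two small points to tighten: the strict rate $\alpha<\bar\alpha$ (rather than $\le\bar\alpha$) requires invoking the strictness margin of the LMI (e.g.\ replacing $\bar\alpha^2 P$ by $(\bar\alpha^2-\epsilon)P$ for small $\epsilon>0$), and the well-posedness claim should be stated as the known equilibrium-network lemma that $2\Lambda-\Lambda D_{11}-D_{11}^\top\Lambda\succ 0$ guarantees a unique solution of $v=C_1x+D_{11}\sigma(v)+D_{12}u+\beta_v$, since $Y\succ 0$ alone does not make $D_{11}$ a norm contraction and a plain Banach argument does not directly apply.
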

\medskip

For RENs with large~$n$ and/or~$q$, training a contracting or robust REN while solving these matrix inequalities during the training process is often intractable due to the involved computational complexity.
To mitigate this issue,~\cite{revay2023recurrent} proposed a \emph{direct parameterization}.
Loosely speaking, a direct parameterization is a mapping from $\RR^{n_\theta}$, for some $n_\theta \in \NN$, to the weights $K$ and biases $\beta$ of a REN~\eqref{eq:system} such that for any choice in $\RR^{n_\theta}$, the corresponding REN is contracting and/or robust with a given $(Q,S,R)$.
As an auxiliary variable, the \emph{certificate} $P$ is also trained in this approach.

\begin{remark}
    Contraction is intimately connected to \emph{incremental stability} and \emph{convergence}, see~\cite{jungers2024discrete}.
    In fact, under the conditions in Theorem~\ref{thm:LMI_cond}, the REN is also incrementally stable and globally exponentially convergent, see~\cite{jungers2024discrete,tran2018convergence}.
\end{remark}

\begin{remark}
    For continuous-time simplified RENs with $D_{11}=0$,~\cite{besselink2011model,shakib2023model,Shakib2024optimal} exploit (incremental) conditions, similar to those in Theorem~\ref{thm:LMI_cond}, to provide error bounds on the reduction error.
    Herein, enforcing the reduced-order model to satisfy the same incremental conditions has proven vital.
\end{remark}

\section{State dimension reduction for RENs}\label{sec:main}

Given a contracting or robust REN of order~$n$ satisfying Theorem~\ref{thm:LMI_cond}, the goal is to construct a reduced-order REN of order $\hat n < n$ that preserves contraction/robustness for the \emph{same} $(Q,S,R)$ and yields responses  `close' to the responses of the original $n$-dimensional REN.
The proposed reduction approach focuses only on reducing the state dimension of the LTI dynamics only, see also Fig.~\ref{fig:REN_schematic}.
The first main result, presented in Section~\ref{S:3:proj_REN}, shows that the REN contraction certificate $P$ enables order reduction to any chosen reduced order by fixing one projection matrix.
This is inherently tied to the accuracy of the REN as it consists of an interconnection between LTI dynamics and static activation functions, and only the LTI component is reduced.
The other projection matrix is then optimized to reduce the $\Ht$-norm of the error between the LTI components of the full and reduced RENs.
A formal analysis of how this $\Ht$ error relates to output trajectory errors is left for future work.
We recall $\Ht$-optimal model reduction for LTI systems in Section~\ref{S:3:proj_LTI} and propose an iterative algorithm in Section~\ref{S:3:iterative_alg} that updates the free projection matrix to reduce this error while preserving contraction and robustness at every iteration.

\subsection{A robustness-preserving projection for RENs}\label{S:3:proj_REN}
Consider the following \emph{reduced-order} REN:
\begin{align}\label{eq:ROM}
\begin{bmatrix}
    \hat x_{t+1} \\
    \hat v_t \\
    \hat y_t
\end{bmatrix}
=
\WW^\top
K
\VV
\begin{bmatrix}
    \hat x_t \\
    \hat w_t \\
    u_t
\end{bmatrix}
+
\WW^\top \beta,
\quad
    \hat w_t = \sigma(\hat v_t),
\end{align}
where the projection matrices $\WW\in\CC^{(n+q+p)\times(\hat n + q + p)}$ and $\VV^{(n+q+m)\times(\hat n + q + m)}$ have the following structure
\vspace{-1mm}
\begin{equation}\label{eq:def_VW_bold}
\! \! \WW^\top \! \coloneqq \text{blkdiag}(W^\top, I_q, I_p), \quad 
\VV \coloneqq \text{blkdiag}(V, I_q, I_m),
\end{equation}
and where $W,V\in\RR^{n\times \hat n}$ are such that $W^\top V = I$.
Given any projection matrix $V$ that is full column rank, let
\vspace{-1mm}
\begin{equation}\label{eq:def_W}
    W^\top \coloneqq \left(V^\top P V \right)^{-1} V^\top P,
    \vspace{-1mm}
\end{equation}
where $0 \prec P\in\RR^{n\times n}$ is the certificate in Theorem~\ref{thm:LMI_cond}, either for contraction and/or robustness.
The following result shows that the reduced-order REN~\eqref{eq:ROM} is contraction- and robustness-preserving with these projection matrices.
\vspace{1mm}

\begin{theorem}\label{thm:projection_REN}
    Consider system~\eqref{eq:system} and let $V\in\RR^{n \times \hat n}$ be a matrix such that rank$(V) = \hat n$.
    \begin{enumerate}
        \item Suppose that system~\eqref{eq:system} satisfies all the conditions of Theorem~\ref{thm:LMI_cond} for \textit{contraction} with certificate $P$ and rate $\bar \alpha$.
        Then, the reduced-order REN~\eqref{eq:ROM} with $W$ as in~\eqref{eq:def_W} also satisfies the conditions of Theorem~\ref{thm:LMI_cond} for \textit{contraction} with the same rate $\bar \alpha$.
        \item Suppose that system~\eqref{eq:system} satisfies all the conditions of Theorem~\ref{thm:LMI_cond} for \textit{robustness} for a given triple $(Q,S,R)$ with certificate $P$ and rate $\bar \alpha$.
        Then, the reduced-order REN~\eqref{eq:ROM} with $W$ as in~\eqref{eq:def_W} also satisfies the conditions of Theorem~\ref{thm:LMI_cond} for \textit{robustness} for the same $(Q,S,R)$ and rate $\bar \alpha$.
        Hence, the reduced-order REN~\eqref{eq:ROM} is contractive and robust for the same $(Q,S,R)$.
        \ifpreprint{}
        \else{
        \hfill $\blacksquare$}
        \fi
    \end{enumerate}
\end{theorem}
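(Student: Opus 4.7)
The plan is to construct explicit certificates for the reduced-order REN and then directly verify the LMIs of Theorem~\ref{thm:LMI_cond}. Natural candidates are $\hat P \coloneqq V^{\top} P V \succ 0$, positive definite since $P \succ 0$ and $V$ has full column rank, and $\hat \Lambda \coloneqq \Lambda \in \mathbb{D}_+$, kept unchanged. Since $\hat D_{11} = D_{11}$, this immediately gives $\hat Y = 2\hat\Lambda - \hat\Lambda \hat D_{11} - \hat D_{11}^\top \hat\Lambda = Y$. From~\eqref{eq:def_W}, $W = PV\hat P^{-1}$, which yields the oblique-projection identity $W^{\top} V = I$ and
\begin{equation*}
    W \hat P W^{\top} = P V (V^{\top} P V)^{-1} V^{\top} P.
\end{equation*}

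The heart of the argument is the semidefinite inequality
\begin{equation*}
    W \hat P W^{\top} \preceq P,
\end{equation*}
which follows by writing $P - W \hat P W^{\top} = P^{1/2}(I - \Pi)P^{1/2}$ with $\Pi$ the orthogonal projector onto $\mathrm{col}(P^{1/2}V)$, and noting $I - \Pi \succeq 0$. This inequality controls the quadratic term in both~\eqref{eq:cond:contraction} and~\eqref{eq:cond:IQC} after a congruence.

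For part~1, I would apply the congruence with $\tilde T \coloneqq \text{blkdiag}(V, I_q)$ to~\eqref{eq:cond:contraction}, which preserves strict positive definiteness since $\tilde T$ has full column rank. The linear part transforms directly into the reduced-order block $\bigl[\begin{smallmatrix}\bar\alpha^2 \hat P & -\hat C_1^\top \hat\Lambda \\ -\hat \Lambda \hat C_1 & \hat Y\end{smallmatrix}\bigr]$ with $\hat C_1 = C_1 V$. For the quadratic part, setting $N \coloneqq \bigl[\begin{smallmatrix}A^\top \\ B_1^\top\end{smallmatrix}\bigr]$ and $\hat N \coloneqq \tilde T^{\top} N W = \bigl[\begin{smallmatrix}\hat A^\top \\ \hat B_1^\top\end{smallmatrix}\bigr]$, the key inequality gives
\begin{equation*}
    \tilde T^{\top} N P N^{\top} \tilde T \succeq \tilde T^{\top} N (W \hat P W^{\top}) N^{\top} \tilde T = \hat N \hat P \hat N^{\top}.
\end{equation*}
Combining these yields the reduced-order version of~\eqref{eq:cond:contraction} for $\hat P, \hat\Lambda$ and the same rate $\bar\alpha$, so Theorem~\ref{thm:LMI_cond} ensures the reduced REN is well-posed and contracting.

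Part~2 proceeds identically with $\tilde T \coloneqq \text{blkdiag}(V, I_q, I_m)$. Each appearance of $C_2, D_{21}, D_{22}$ is right-multiplied by $V$ (or left unchanged), producing $\hat C_2, \hat D_{21}, \hat D_{22}$, while $(Q, S, R)$ are untouched, so the linear block of~\eqref{eq:cond:IQC} transforms covariantly into its reduced version. The $L Q L^{\top}$ block with $L \coloneqq \bigl[\begin{smallmatrix}C_2^\top & D_{21}^\top & D_{22}^\top\end{smallmatrix}\bigr]^{\top}$ transforms exactly into $\hat L Q \hat L^{\top}$, and the $N P N^{\top}$ block, now with $N$ augmented by $B_2^{\top}$, is again dominated by $\hat N \hat P \hat N^{\top}$ via the same inequality. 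Theorem~\ref{thm:LMI_cond} then delivers both the incremental IQC with the same $(Q, S, R)$ and contraction at rate $\bar\alpha$. The only substantive step is the key semidefinite inequality; everything else is bookkeeping the congruence block by block.
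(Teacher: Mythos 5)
Your proposal is correct and takes essentially the same route as the paper: the same certificates $\hat P = V^{\top} P V$ and $\hat\Lambda = \Lambda$, the same block-diagonal congruence with $\mathrm{blkdiag}(V,I,I)$, and the same key inequality $P V (V^{\top} P V)^{-1} V^{\top} P \preceq P$ controlling the $-G^{\top} P G$ term. The only (minor) difference is how that inequality is proved: you use the orthogonal projector $P^{1/2} V (V^{\top} P V)^{-1} V^{\top} P^{1/2}$, while the paper's auxiliary Lemma~\ref{lem:P_ineq} uses a Schur-complement argument — both are standard and equally valid.
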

\ifpreprint{
\begin{proof}
    The proof and an auxiliary lemma can be found in the appendix.
\end{proof}
}\fi

This theorem allows for an arbitrarily small state dimension~$\hat n$ of the reduced-order REN, while still preserving contraction and/or robustness.
Using numerical examples, the role of the order $\hat n$ in the accuracy of the reduced REN is investigated in Section~\ref{sec:example}.
The selection of~$V$ to obtain an accurate reduced-order REN is addressed in the next section.

\subsection{$\Ht$-optimal model reduction for LTI systems}
\label{S:3:proj_LTI}
The reduction error in the LTI component directly influences the error between the large-scale and the reduced REN.
To address this, $h_2$-optimal reduction for LTI systems is~revisited.
Consider a discrete-time LTI system described~by
\vspace{-1mm}
\begin{equation}\label{eq:sys_LTI}
    x_{k+1} = A x_k + B u_k, \quad y_k = C x_k + Du_k,
    \vspace{-1mm}
\end{equation}
where, at time $k \in \NN$, $x_k\in \RR^n$, $u_k\in\RR^m$, and $y_k\in\RR^p$, and $A\in\RR^{n\times n},B\in\RR^{n\times m},C\in\RR^{p\times n},$ and $D\in\RR^{p\times m}$.
Similarly, consider the reduced-order model described by
\vspace{-1mm}
\begin{equation}\label{eq:ROM_LTI}
    \hat x_{k+1} = \hat A \hat x_k + \hat B u_k, \quad \hat y_k = \hat C \hat x_k + \hat Du_k,
\end{equation}
where, $\hat x_k\in \RR^{\hat n}$, $u_k\in\RR^m$, and $\hat y_k\in\RR^p$, and $\hat A\in\RR^{\hat n\times \hat n},\hat B\in\RR^{\hat n\times m},\hat C\in\RR^{p\times \hat n}$, and $\hat D\in\RR^{p\times m}$.

For a stable discrete-time transfer function $H:\CC\rightarrow\CC^{p\times m}$, the $\Ht$-norm is defined as
$\norm{H}_{\Ht}^2 \coloneqq \frac{1}{2\pi} \int_{-\infty}^{+\infty} \text{trace}\left[ H(e^{j\omega})^\star H(e^{j\omega})\right]\text{d}\omega,$
leading to the $\Ht$-norm of the error dynamics induced by reduction:
\vspace{-1mm}
\begin{equation}\label{eq:def_JJ}
    \JJ \coloneqq \norm{H(z)-\hat H(z)}_{\Ht}^2,
\end{equation}
where $H(z)\coloneqq C(zI-A)^{-1}B+D$ is the transfer function of the original system~\eqref{eq:sys_LTI} and $\hat H(z)\coloneqq \hat C(zI-\hat A)^{-1}\hat B+\hat D$ of the reduced-order model~\eqref{eq:ROM_LTI}.
Necessary conditions~\eqref{eq:ROM_LTI} to be a minimizer of $\JJ$ are recalled from~\cite{bunse2010h2}.

\begin{theorem}[~\!\!\!\!\cite{bunse2010h2}]\label{thm:optimality_conditions}
    Given the system~\eqref{eq:sys_LTI} with simple poles and transfer function \( H(z) \). 
    Let \( \hat{H}(z) \) be the transfer function of the reduced order model~\eqref{eq:ROM_LTI} with simple poles.
    Let~\eqref{eq:ROM_LTI} be in an eigenvector basis \( \hat{A} = \operatorname{diag} \left( \hat{\lambda}_1, \dots, \hat{\lambda}_{\hat n} \right) \), \( \hat{B} = \begin{bmatrix} \hat{b}_1^\star, \dots, \hat{b}_{\hat n}^\star \end{bmatrix}^\star \), \( \hat{C} = \begin{bmatrix} \hat{c}_1, \dots, \hat{c}_{\hat n} \end{bmatrix} \), and $\hat D = D$. 
    If \( \hat{H}(z) \) is a minimizer of $\JJ$ in~\eqref{eq:def_JJ}, then the following conditions are satisfied for $k = 1, \dots, \hat n$:
\begin{align*}
\textstyle{
\hat{c}_k^\star H \left( \frac{1}{\hat{\lambda}_k^{\star}} \right) =
\hat{c}_k^\star \hat H \left( \frac{1}{\hat{\lambda}_k^{\star}} \right), \quad 
H \left( \frac{1}{\hat{\lambda}_k^{\star}}  \right) \hat{b}_k^\star = \hat{H} \left( \frac{1}{\hat{\lambda}_k^{\star}} \right) \hat{b}_k^\star,} \\
\textstyle{
\hat{c}_k^\star H' \left( \frac{1}{\hat{\lambda}_k^{\star}} \right) \hat{b}_k^\star = \hat{c}_k^\star \hat{H}' \left( \frac{1}{\hat{\lambda}_k^{\star}} \right) \hat{b}_k^\star, \qquad \qquad \quad \!\!
}%
\end{align*}
where \( (1/{\hat{\lambda}_k^{\star}})  \) are the mirrored images of the poles of~\eqref{eq:ROM_LTI} with respect to the unit circle, \( \hat{b}_k \) is the \( k \)-th row of \( \hat{B} \), and \( \hat{c}_k \) is the \( k \)-th column of \( \hat{C} \), and where $H'(\bar z)\coloneqq \left.\text{d} H(z)/\text{d}z \right|_{z=\bar z}$ is the first derivative of $H$ evaluated at $\bar z\in\CC$.
\hfill$\blacksquare$
\end{theorem}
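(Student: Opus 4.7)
My plan is to derive these first-order stationarity conditions by differentiating $\JJ$ in a pole-residue parameterization of $\hat H$, in the spirit of the classical Meier--Luenberger argument, but adapted to discrete time. I would begin by using the assumed eigenvector basis to write
\[
\hat H(z) = \sum_{k=1}^{\hat n} \frac{\hat c_k \hat b_k}{z - \hat\lambda_k} + D,
\]
so that the free parameters are $\{\hat\lambda_k,\hat b_k,\hat c_k\}_{k=1}^{\hat n}$ and the feedthrough $\hat D=D$ cancels exactly in the error $H-\hat H$, removing any dependence of $\JJ$ on $D$. Under the simple-pole assumption on both $H$ and $\hat H$, $\JJ$ is smooth in these parameters on a neighborhood of the minimizer.

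Next I would evaluate $\JJ = \|H\|_{\Ht}^2 - 2\,\mathrm{Re}\,\langle H,\hat H\rangle_{\Ht} + \|\hat H\|_{\Ht}^2$ in closed pole-residue form. Using Parseval on the unit circle, equivalently the residue theorem applied to the contour $|z|=1$, each summand of $\langle G,\hat H\rangle_{\Ht}$ associated with a pole $\hat\lambda_k$ of $\hat H$ reduces, for any stable rational $G$, to an evaluation of $G$ at the mirrored point $1/\hat\lambda_k^{\star}$, weighted tangentially by $\hat c_k$ on the left and $\hat b_k$ on the right. The reflection $1/\hat\lambda_k^{\star}$ across the unit circle is the discrete-time analog of the continuous-time mirror $-\hat\lambda_k^{\star}$ across the imaginary axis; applying the same formula with $G=\hat H$ handles the $\|\hat H\|_{\Ht}^2$ term. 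With $\JJ$ in this closed form, setting its gradient to zero then produces the three stated conditions: differentiation in the row-residue $\hat b_k$ gives the left-tangential matching of $H$ and $\hat H$ at $1/\hat\lambda_k^{\star}$ after the common scalar weights from the cross and self terms cancel; differentiation in the column-residue $\hat c_k$ yields the symmetric right-tangential condition; and differentiation in the pole $\hat\lambda_k$ is the only step that moves the evaluation point itself, so the chain rule brings in $H'(1/\hat\lambda_k^{\star})$ and $\hat H'(1/\hat\lambda_k^{\star})$ and, after subtracting the contributions already handled by the other two conditions, leaves the bitangential Hermite condition.

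The main obstacle I anticipate is the residue bookkeeping: correctly tracking complex conjugates and the $dz/z$ factor from the unit-circle integral so that the evaluation points come out precisely as $1/\hat\lambda_k^{\star}$ and the $\hat\lambda_k^{-1}$-type scalar weights cancel cleanly between the cross and self terms, leaving conditions free of such weights. A secondary subtlety is justifying term-by-term differentiation under the pole-residue sum; this is legitimate because the simple-pole hypothesis guarantees that each summand remains analytic in the parameters on a neighborhood of the stationary point, so pointwise and global derivatives agree.
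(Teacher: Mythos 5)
This theorem is imported verbatim from~\cite{bunse2010h2}; the paper gives no proof of it, so there is no in-paper argument to compare against, and your sketch must be judged against the source derivation --- which it essentially reproduces: a pole--residue parameterization of $\hat H$, evaluation of the $\Ht$ inner products by residue calculus on the unit circle, and first-order stationarity of $\JJ$ in the residues and poles, i.e., the discrete-time, tangential (MIMO) analogue of the Meier--Luenberger argument. The outline is sound, and you correctly flag the two genuinely discrete-time issues: the $dz/z$ factor produces an additional residue at $z=0$, so each cross/self term has the form $\frac{1}{\hat\lambda_k^{\star}}\,\hat c_k^\star\bigl(G(1/\hat\lambda_k^{\star})-G(\infty)\bigr)\hat b_k^\star$ rather than a bare evaluation at the mirror point, and these feedthrough corrections only cancel in $\JJ$ because the theorem fixes $\hat D = D$ so that the error is strictly proper. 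For a complete write-up you would still need to (i) carry out, not merely assert, the differentiation of $\|\hat H\|_{\Ht}^2$, where each $\hat b_k,\hat c_k,\hat\lambda_k$ appears in both slots of the inner product and, for $\hat\lambda_k$, also in the scalar weight $1/\hat\lambda_k^{\star}$, so that the weight-derivative terms are absorbed using the already-derived residue conditions before the bitangential Hermite condition emerges; (ii) phrase stationarity in the complex parameters via real/imaginary parts or Wirtinger derivatives; and (iii) note the implicit nondegeneracy $\hat\lambda_k\neq 0$ (and distinctness of the $\hat\lambda_k$) needed for the mirror points and the eigenvector realization to exist. These are bookkeeping obligations you already anticipated, not gaps in the approach.
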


An extended version of this result covers the case of repeated poles in the system or reduced-order model~\cite{bunse2010h2}.
The necessary conditions in Theorem~\ref{thm:optimality_conditions} are stated in terms of the tangential interpolation data~$\{ \hat\lambda_k,\hat c_k , \hat b_k \}_{k=1}^{\hat{n}}$ obtained from the \emph{optimal} reduced-order model~\eqref{eq:ROM_LTI}.
A projection-based reduction technique for LTI systems that provides a reduced-order model that achieves interpolation at a given set of tangential interpolation data is recalled next.

\begin{lemma}[~\!\!\!\!\cite{bunse2010h2}]\label{lem:tangential_interpolation}
Let \( V , W \in \RR^{n \times \hat n} \) be rank~\( \hat n \) matrices such that \( W^\top V = I_{\hat n} \). 
Let \( \tau_k \in \mathbb{C} \), \( l_k \in \mathbb{C}^{1 \times p} \), and \( r_k \in \mathbb{C}^{m \times 1} \) for \( k = 1, \dots, \hat n \), be given sets of
tangential interpolation data that are closed under conjugation and such that \( A - \tau_k I_n \) is invertible.
If for all \( k = 1, \dots, \hat n \),
\begin{align*}
    (\tau_k I - A)^{-1} B r_k \in \text{col}(V), \quad \!\!
    (\tau_k I - A^\star)^{-1} C^\star l_k^\star \in \text{col}(W),
\end{align*}
then~\eqref{eq:ROM_LTI}~with~$(\hat{A}, \hat{B}, \hat{C}, \hat D) = (W^\top A V,\allowbreak W^\top B, C V,D)$ has a transfer function which satisfies for $k = 1, \dots, \hat n$:
\begin{align*}
     l_k H \left( \tau_k \right) &= l_k \hat{H} \left( \tau_k \right), \quad 
    H \left( \tau_k \right) r_k = \hat{H} \left( \tau_k \right) r_k, \\
    &\ell_k H' \left( \tau_k \right) r_k = \ell_k \hat{H}' \left( \tau_k \right) r_k. 
    \tag*{\text{$\blacksquare$}}
\end{align*} 
\end{lemma}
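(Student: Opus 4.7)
The plan is to reduce each of the three interpolation identities to a short linear-algebra manipulation based on the biorthogonality $W^\top V = I_{\hat n}$ (from which $V^\top W = I_{\hat n}$ follows by transposition) together with the structural definitions $\hat A = W^\top A V$, $\hat B = W^\top B$, $\hat C = CV$, and $\hat D = D$. The two subspace hypotheses enter only by supplying finite-dimensional coordinate vectors that realize the required resolvent actions inside $\text{col}(V)$ and $\text{col}(W)$.

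First I would prove the right tangential identity. The hypothesis $(\tau_k I - A)^{-1} B r_k \in \text{col}(V)$ yields $\xi_k \in \CC^{\hat n}$ with $V\xi_k = (\tau_k I - A)^{-1} B r_k$, i.e., $(\tau_k I - A) V \xi_k = B r_k$. Left-multiplying by $W^\top$ and using $W^\top V = I_{\hat n}$ collapses this to $(\tau_k I - \hat A)\xi_k = \hat B r_k$, so $\xi_k = (\tau_k I - \hat A)^{-1} \hat B r_k$ and consequently $V(\tau_k I - \hat A)^{-1}\hat B r_k = (\tau_k I - A)^{-1} B r_k$. Pre-multiplying by $C$ and adding $D r_k$ then gives $\hat H(\tau_k) r_k = H(\tau_k) r_k$. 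The left tangential identity $\ell_k H(\tau_k) = \ell_k \hat H(\tau_k)$ follows by the dual argument applied to the second subspace hypothesis, producing $\eta_k$ with $W \eta_k = (\tau_k I - A^\star)^{-1} C^\star \ell_k^\star$ and, via left multiplication by $V^\top$ together with $V^\top W = I$, $\hat A^\star = V^\top A^\star W$, and $\hat C^\star = V^\top C^\star$, the reduced-side relation $\eta_k = (\tau_k I - \hat A^\star)^{-1} \hat C^\star \ell_k^\star$.

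For the Hermite (bitangential) condition, I would differentiate the resolvent to obtain $H'(z) = -C(zI - A)^{-2} B$ and factor the squared resolvent:
\begin{equation*}
\ell_k H'(\tau_k) r_k = -\bigl[\ell_k C(\tau_k I - A)^{-1}\bigr]\bigl[(\tau_k I - A)^{-1} B r_k\bigr] = -\eta_k^\star W^\top V \xi_k = -\eta_k^\star \xi_k,
\end{equation*}
using the two factored identities just derived together with $W^\top V = I_{\hat n}$. The same computation on the reduced side, using $\hat H'(z) = -\hat C(zI - \hat A)^{-2}\hat B$ with the identities $\xi_k = (\tau_k I - \hat A)^{-1}\hat B r_k$ and $\eta_k = (\tau_k I - \hat A^\star)^{-1}\hat C^\star \ell_k^\star$, yields $\ell_k \hat H'(\tau_k) r_k = -\eta_k^\star \xi_k$ as well, completing the Hermite identity.

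The main obstacle is conceptual rather than computational: $\tau_k$, $r_k$, and $\ell_k$ may be complex while $V$ and $W$ are real, so $\xi_k$ and $\eta_k$ live in $\CC^{\hat n}$ and the subspace inclusions must be interpreted in the complexifications of $\text{col}(V)$ and $\text{col}(W)$. The closed-under-conjugation hypothesis on the tangential data is exactly what makes this consistent: it allows complex-conjugate pairs of interpolation points and directions to be bundled into real basis vectors of $V$ and $W$, guarantees that the required complex resolvent vectors actually lie in the complexifications of the real column spaces, and ensures that the resulting reduced model $(\hat A, \hat B, \hat C, \hat D)$ is real-valued as claimed.
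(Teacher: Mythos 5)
The paper never proves this lemma --- it is quoted from the cited reference [bunse2010h2] --- so there is no in-paper argument to compare against. Your proof is the standard Petrov--Galerkin / rational-Krylov interpolation argument used in that literature, and its core is sound: in the right-tangential step you correctly produce $\xi_k$ with $V\xi_k=(\tau_k I-A)^{-1}Br_k$, project with $W^\top$ using $W^\top V=I_{\hat n}$ to get $(\tau_k I-\hat A)\xi_k=\hat Br_k$, and conclude $V(\tau_k I-\hat A)^{-1}\hat Br_k=(\tau_k I-A)^{-1}Br_k$, hence $H(\tau_k)r_k=\hat H(\tau_k)r_k$. The only implicit assumption there is nonsingularity of $\tau_k I-\hat A$, which is already implicit in the statement because $\hat H(\tau_k)$ must be defined.

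There is, however, a concrete conjugation gap in your left-tangential and Hermite steps under the paper's notation, where $\star$ denotes the \emph{Hermitian} transpose. From $W\eta_k=(\tau_k I-A^\star)^{-1}C^\star\ell_k^\star$, taking conjugate transposes gives $\eta_k^\star W^\top=\ell_k C(\bar\tau_k I-A)^{-1}$: the resolvent is evaluated at $\bar\tau_k$, not at $\tau_k$. Hence the dual argument actually yields $\ell_k H(\bar\tau_k)=\ell_k\hat H(\bar\tau_k)$, and your Hermite computation, with the conjugations tracked, proves the mixed-point identity $\ell_k C(\bar\tau_k I-A)^{-1}(\tau_k I-A)^{-1}Br_k=\ell_k\hat C(\bar\tau_k I-\hat A)^{-1}(\tau_k I-\hat A)^{-1}\hat Br_k$, which equals $-\ell_k H'(\tau_k)r_k$ only when $\tau_k$ is real. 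Closure under conjugation does not repair this at a fixed index $k$: it supplies $(\bar\tau_k I-A^\top)^{-1}C^\top\ell_k^\top\in\text{col}(W)$ (the conjugate of the hypothesis at index $k$), not the vector $(\bar\tau_k I-A^\star)^{-1}C^\star\ell_k^\star$ that your identification $\eta_k^\star W^\top=\ell_k C(\tau_k I-A)^{-1}$ requires. Your proof (and the statement) become literally correct either by reading $\star$ in the subspace hypotheses as a plain, non-conjugating transpose, or by posing the left condition at $\bar\tau_k$, i.e.\ $(\bar\tau_k I-A^\star)^{-1}C^\star\ell_k^\star\in\text{col}(W)$, as in the IRKA-type literature; as written, for genuinely complex $\tau_k$ you have only established the left and bitangential conditions at the conjugated points. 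Your closing discussion of realness of $(\hat A,\hat B,\hat C,\hat D)$ is fine but does not address this specific point, so you should either adjust the hypothesis you invoke or carry the conjugations explicitly through the left and Hermite computations.
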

\vspace{1mm}

Lemma~\ref{lem:tangential_interpolation} can be used to construct a reduced-order model that satisfies \emph{given} interpolation data~$\{ \tau_k ,l_k ,r_k \}_{k=1}^{\hat{n}}$.
However, after constructing such a reduced-order model, it typically does not satisfy the necessary conditions of Theorem~\ref{thm:optimality_conditions}.

To address this issue, an algorithm can be defined to iteratively update the tangential data until the reduced-order model satisfies the conditions of Theorem~\ref{thm:optimality_conditions}.
Starting from an initial choice of interpolation data, Lemma~\ref{lem:tangential_interpolation} is applied to construct a reduced-order model.
Based on Theorem~\ref{thm:optimality_conditions}, the interpolation data is then updated from the spectral information of the current reduced-order model, and the process is repeated.
This algorithm was first proposed in~\cite{gugercin2008h_2} under the acronym IRKA and was later generalized to multivariable LTI systems in~\cite{bunse2010h2,xu2011optimal}.
Although the convergence of such algorithms is not guaranteed, these algorithms often converge quickly in practice~\cite{gugercin2008h_2}, however, without guaranteeing stability preservation, even if the original LTI system was stable.

An alternative iterative algorithm for LTI systems that preserves stability was presented in~\cite{gugercin2008iterative}.
This algorithm updates only \emph{one} of the projection matrices, namely the matrix~$V$.
In particular, it relies on the following result.
\vspace{1mm}

\begin{lemma}\label{lem:tangential_interpolation_r}
Let all variables be defined as in Lemma~\ref{lem:tangential_interpolation}.
If for all \( k = 1, \dots, \hat n \),
$(\tau_k I - A)^{-1} B r_k \in \text{col}(V),$
then, for any matrix $W\in \RR^{n\times\hat n}$ such that $W^\top V = I$, the model~\eqref{eq:ROM_LTI} with \((\hat{A}, \hat{B}, \hat{C}, \hat D) = (W^\top A V, W^\top B, C V, D) \) has a transfer function which satisfies:%
\vspace{1mm}
\begin{equation}\label{eq:tangential_interpolation_r}
    H \left( \tau_k \right) r_k = \hat{H} \left( \tau_k \right) r_k,
    \vspace{-2mm}
\end{equation}
for $k = 1, \dots, \hat n.$ 
\hfill $\blacksquare$
\end{lemma}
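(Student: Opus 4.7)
The plan is to mimic the standard right-tangential interpolation argument from projection-based model reduction, stripped down to the single column-space condition on $V$. The key algebraic identity is that the projection intertwines the resolvents of $A$ and $\hat A$ on the relevant subspace, so evaluating either transfer function in the direction $r_k$ boils down to the same vector in $\CC^{\hat n}$.

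First I would use the hypothesis $(\tau_k I - A)^{-1} B r_k \in \text{col}(V)$ to produce a unique vector $\hat x_k \in \CC^{\hat n}$ such that
\begin{equation*}
V \hat x_k = (\tau_k I - A)^{-1} B r_k.
\end{equation*}
Uniqueness follows because $V$ has full column rank $\hat n$ (since $W^\top V = I_{\hat n}$). Rearranging gives the identity $(\tau_k I - A) V \hat x_k = B r_k$, which is the workhorse for both sides of the interpolation equality.

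Next I would evaluate the left-hand side of \eqref{eq:tangential_interpolation_r}. Using the definition of $H$, the identity above, and $\hat C = C V$, $\hat D = D$,
\begin{equation*}
H(\tau_k) r_k = C (\tau_k I - A)^{-1} B r_k + D r_k = C V \hat x_k + D r_k = \hat C \hat x_k + \hat D r_k.
\end{equation*}
For the right-hand side I would pre-multiply $(\tau_k I - A) V \hat x_k = B r_k$ by $W^\top$ and invoke $W^\top V = I_{\hat n}$ together with $\hat A = W^\top A V$ and $\hat B = W^\top B$ to obtain
\begin{equation*}
(\tau_k I - \hat A) \hat x_k = \hat B r_k.
\end{equation*}
Assuming $\tau_k I - \hat A$ is invertible (so that $\hat H(\tau_k)$ is well-defined, which is implicit in the statement), this gives $\hat x_k = (\tau_k I - \hat A)^{-1} \hat B r_k$ and therefore
\begin{equation*}
\hat H(\tau_k) r_k = \hat C \hat x_k + \hat D r_k = H(\tau_k) r_k,
\end{equation*}
which is the desired claim.

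The only nontrivial point, and the main obstacle I anticipate, is the invertibility of $\tau_k I - \hat A$; this is standard to assume in interpolatory model reduction and is tacitly inherited from Lemma~\ref{lem:tangential_interpolation}, but if one wanted a self-contained proof one would have to rule out $\hat x_k \in \ker(\tau_k I - \hat A)$ separately. Everything else is essentially substitution using $W^\top V = I_{\hat n}$, so no deeper machinery is needed.
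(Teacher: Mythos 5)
Your argument is correct: it is the standard one-sided (Petrov--Galerkin) interpolation argument, using $V\hat x_k = (\tau_k I - A)^{-1}Br_k$ and premultiplication by $W^\top$ to get $(\tau_k I - \hat A)\hat x_k = \hat B r_k$, which is precisely the reasoning the paper relies on (it states this lemma without proof, deferring to the interpolatory model-reduction literature, e.g.\ the ISRK-type results it cites). Your explicit flag that $\tau_k I - \hat A$ must be nonsingular for $\hat H(\tau_k)$ to be defined is the right caveat and is indeed tacit in the paper's statement, so there is nothing substantive to add.
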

\vspace{1mm}

It was shown in~\cite{gugercin2008iterative} that using $W$ as in~\eqref{eq:def_W} with~$P$ replaced by the observability Gramian of the original system~\eqref{eq:sys_LTI}, preserves stability for the reduced-order model at each iteration of a corresponding iterative algorithm called ISRK.
However, it should be noted that ISRK, once converged, only guarantees that the necessary conditions
\begin{align}\label{eq:necessary_condition_b}
 \textstyle{H \left( \frac{1}{\hat{\lambda}_k^{\star}}  \right) \hat{b}_k^\star = \hat{H} \left( \frac{1}{\hat{\lambda}_k^{\star}} \right) \hat{b}_k^\star, \quad \text{for } k = 1, \dots, \hat n,}
\end{align}
are satisfied, not those corresponding to~$\hat c^\star_k H(\cdot)$ and $H'(\cdot)$.
Inspired by the ISRK algorithm~\cite{gugercin2008iterative}, the next section proposes an iterative algorithm that preserves contraction and/or robustness for the reduced-order RENs.

\subsection{An iterative algorithm for the state reduction of RENs}
\label{S:3:iterative_alg}

\begin{algorithm}[t]
	\caption{An iterative contraction/robustness preserving reduction algorithm for recurrent equilibrium networks}\label{Alg}
	\algorithmicrequire { } REN of order $n$ with a certificate $P \succ 0$ for contraction and/or robustness and a reduction order $\hat n \leq n$.
    Let $B=\begin{bmatrix} B_1 & B_2 \end{bmatrix}$
	\begin{algorithmic}[1]
		\State Take an initial selection of the shifts~$\tau_i$ and the directions~$r_i$, for $i=1,\ldots,\hat n$, such that the sets $\{\tau_i\}_{i=1}^{\hat n}$ and $\{r_i\}_{i=1}^{\hat n}$ are closed under conjugation and such that $(\tau_i I - A)$ is non-singular for all $i$.
		\While{relative change in $\{\tau_i,r_i\}_{i=1}^{\hat n} >$ tolerance}
        \State Construct $V$ according to Lemma~\ref{lem:tangential_interpolation_r}.
        \vspace{0.5mm}
        \State Take $W^\top = \left(V^\top P V\right)^{-1}V^\top P$ as in~\eqref{eq:def_W}.
		\State Construct $\hat K \leftarrow \WW^\top K \VV$ with $\WW$ and $\VV$ as in~\eqref{eq:def_VW_bold}.
        \State Compute the eigenvalue decomposition 
        \vspace{1mm}
        \NoNumber{$\qquad\hat A = \hat X\hat\Omega \hat X^{-1}$, 
        where $\hat\Omega = \text{diag}\left(\hat \lambda_1,\ldots, \hat\lambda_{\hat n}\right)$.}
        \vspace{1mm}
        \State Update $\tau_1 \leftarrow \left(\hat \lambda_1^\star\right)^{-1}, \ldots, \tau_{\hat n} \leftarrow \left(\hat \lambda_{\hat n}^\star\right)^{-1}$,
        \vspace{1mm}
        \NoNumber{$\begin{bmatrix}
            r_1 & \ldots & r_{\hat n}
        \end{bmatrix} \leftarrow \left(\hat X^{-1} \begin{bmatrix}
            \hat B_1 & \hat B_2
        \end{bmatrix}\right)^\star$.}
		\EndWhile
	\end{algorithmic}
	\algorithmicensure { } A contracting or robust REN~\eqref{eq:ROM} with $\hat\beta \leftarrow \WW^\top \beta$.
\end{algorithm}

Directly applying IRKA~\cite{gugercin2008h_2} or its extensions~\cite{bunse2010h2,xu2011optimal,gugercin2008iterative} to the LTI component of a REN generally fails to preserve contraction or robustness, as these methods do not ensure stability preservation. 
Even stability-preserving variants such as ISRK~\cite{gugercin2008iterative} are insufficient for guaranteeing these properties. 
To address this, we propose an ISRK-inspired algorithm, grounded in Theorems~\ref{thm:projection_REN}–\ref{thm:optimality_conditions} and Lemma~\ref{lem:tangential_interpolation_r}, that preserves contraction and robustness for RENs.

The iterative algorithm is outlined in Algorithm~\ref{Alg}.
It takes the large-scale REN~\eqref{eq:system} with a contraction or robustness certificate \( P \) and a desired \( \hat{n} \) as inputs.
Step~1 initializes the interpolation data \( \{\tau_k, r_k\}_{k=1}^{\hat{n}} \) (e.g., taken randomly).
Step~2 consists of a while loop, where, Steps~3 and 4 update the projection matrices \( V \) and $W$ according to Lemma~\ref{lem:tangential_interpolation_r}
and~\eqref{eq:def_W}, respectively.
Step~5 performs the projection to obtain the reduced-order REN~\eqref{eq:ROM}, which satisfies~\eqref{eq:tangential_interpolation_r}, where \( H \) and \(\hat H\) denote the transfer functions of the LTI component of the large-scale REN~\eqref{eq:system} and of the reduced-order REN~\eqref{eq:ROM}, respectively.
Step~6 brings the reduced-order LTI component into the eigenvector realization and Step~7 updates the tangential data \( \{\tau_k, r_k\}_{k=1}^{\hat{n}} \).
If the relative change in the tangential data remains larger than a tolerance, the process returns to Step~3.
Upon convergence, the bias vector $\hat \beta$ of the reduced REN is updated.
The properties of Algorithm~\ref{Alg} are summarized next for a fixed point of Algorithm~\ref{Alg}, i.e., a point in which the change in the interpolation data is zero.
\vspace{1mm}

\begin{theorem}\label{thm:summary}
    Consider a given REN~\eqref{eq:system} and suppose that $A$ has distinct eigenvalues and that the conditions of Theorem~\ref{thm:LMI_cond} for contraction and/or robustness are satisfied for some $(Q,S,R)$ with certificate $P$. 
    At a fixed point of Algorithm~\ref{Alg}, if rank$(V)=\hat n$ and the eigenvalues of $\hat A$ are distinct, then:
    \begin{itemize}
        \item The necessary conditions~\eqref{eq:necessary_condition_b} for the optimality of the $h_2$-error between the LTI component of the REN~\eqref{eq:system} and the reduced-order REN~\eqref{eq:ROM} are satisfied, and
        \item The reduced-order REN~\eqref{eq:ROM} is contracting and/or robust for the same $(Q,S,R)$ with certificate $V^\top P V$. 
        \ifpreprint{}
        \else{\hfill $\blacksquare$}
        \fi
    \end{itemize}
\end{theorem}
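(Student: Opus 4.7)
The claim splits into (i) contraction/robustness preservation with certificate $V^\top PV$ and (ii) the optimality condition~\eqref{eq:necessary_condition_b}. The plan is to invoke Theorem~\ref{thm:projection_REN} for (i) and to combine the fixed-point identities of Algorithm~\ref{Alg} with Lemma~\ref{lem:tangential_interpolation_r} for (ii). The central observation is that at a fixed point the $V$ and $W$ returned by Steps~3--4 coincide precisely with the matrices required by Theorem~\ref{thm:projection_REN} and Lemma~\ref{lem:tangential_interpolation_r}.

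\textbf{Part (i).} Since $\text{rank}(V)=\hat n$ and $P\succ 0$, one has $V^\top PV\succ 0$, so $W$ in Step~4 is well-defined, $W^\top V=I_{\hat n}$, and $W$ matches~\eqref{eq:def_W}. Theorem~\ref{thm:projection_REN} then yields contraction and/or robustness of~\eqref{eq:ROM} for the same $(Q,S,R)$. To exhibit $\hat P\coloneqq V^\top PV$ explicitly as the reduced certificate, I would apply a congruence by $\text{blkdiag}(V,I_q)$ (resp.\ $\text{blkdiag}(V,I_q,I_m)$) to~\eqref{eq:cond:contraction} (resp.\ \eqref{eq:cond:IQC}), noting that $D_{11},D_{12},D_{21},D_{22},\Lambda,Q,S,R$ are unchanged by the reduction, so the linear parts of the LMI map directly to their reduced counterparts. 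For the quadratic-in-$P$ block $Z^\top PZ$, with $Z=\begin{bmatrix}AV & B_1\end{bmatrix}$ (resp.\ $Z=\begin{bmatrix}AV & B_1 & B_2\end{bmatrix}$), a short calculation using $W^\top=(V^\top PV)^{-1}V^\top P$ shows that $Z^\top P\Pi Z$ equals the reduced quadratic form $\begin{bmatrix}\hat A & \hat B_1\end{bmatrix}^\top\hat P\begin{bmatrix}\hat A & \hat B_1\end{bmatrix}$, where $\Pi\coloneqq V(V^\top PV)^{-1}V^\top P$ is the $P$-oblique projector onto $\text{col}(V)$. The relaxation $Z^\top PZ\succeq Z^\top P\Pi Z$ then follows from $P-P\Pi\succeq 0$, an orthogonal-projection identity applied to $M\coloneqq P^{1/2}V$.

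\textbf{Part (ii).} I view the LTI block of the REN as an LTI system with combined input $\begin{bmatrix}w^\top & u^\top\end{bmatrix}^\top$ and input matrix $B=\begin{bmatrix}B_1 & B_2\end{bmatrix}$. By Steps~3--5 the reduced realization is $(\hat A,\hat B,\hat C,\hat D)=(W^\top AV,\,W^\top B,\,CV,\,D)$, so Lemma~\ref{lem:tangential_interpolation_r} gives $H(\tau_k)r_k=\hat H(\tau_k)r_k$ for $k=1,\dots,\hat n$. At the fixed point, Step~7 enforces $\tau_k=1/\hat\lambda_k^\star$. To identify $r_k$ with $\hat b_k^\star$ in the sense of Theorem~\ref{thm:optimality_conditions}, I would change basis via $\hat A=\hat X\hat\Omega\hat X^{-1}$ to obtain the similar realization $(\hat\Omega,\hat X^{-1}\hat B,\hat C\hat X,D)$; this is precisely the eigenvector basis of Theorem~\ref{thm:optimality_conditions}, in which the $k$-th row of the reduced input matrix is $\hat b_k^\star$. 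The conjugate-transpose in Step~7 encodes exactly this: $r_k=(\hat X^{-1}\hat B\,e_k)^\star=\hat b_k^\star$. Since $\hat H$ is similarity-invariant, the interpolation identity reads $H(1/\hat\lambda_k^\star)\hat b_k^\star=\hat H(1/\hat\lambda_k^\star)\hat b_k^\star$, i.e.~\eqref{eq:necessary_condition_b}.

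\textbf{Main obstacle.} The conceptual core of the proof is the congruence argument in Part~(i) that pins $\hat P=V^\top PV$ down as the reduced certificate: Theorem~\ref{thm:projection_REN} asserts preservation, but extracting the certificate explicitly requires the $P$-projector relaxation above, and the robust case additionally requires tracking that the $(Q,S,R)$-coupled off-diagonal blocks transform correctly under the congruence. The main technical hazard is the bookkeeping in Part~(ii): one must carry the conjugate-transpose conventions of Theorem~\ref{thm:optimality_conditions} and of Step~7 through the eigenvector change of basis so that the identification $r_k=\hat b_k^\star$ drops out cleanly, without sign or conjugation errors. Finally, one should verify that $\tau_k I-A$ remains invertible at the fixed point; this is generic once $A$ and $\hat A$ have distinct eigenvalues, and is already implicit in the hypothesis that Step~3 of Algorithm~\ref{Alg} executes.
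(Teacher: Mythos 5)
Your proposal is correct and follows essentially the same route as the paper, whose proof of this theorem is simply the combination of Theorem~\ref{thm:projection_REN}, Theorem~\ref{thm:optimality_conditions}, and Lemma~\ref{lem:tangential_interpolation_r}; your Parts~(i) and~(ii) fill in exactly those two invocations, including the fixed-point identification $\tau_k=1/\hat\lambda_k^\star$, $r_k=\hat b_k^\star$ in the eigenvector basis. The only stylistic difference is that your Part~(i) re-derives the certificate $\hat P=V^\top PV$ by the congruence and $P$-oblique-projector argument (with the semidefiniteness of $P-PV(V^\top PV)^{-1}V^\top P$ shown via $P^{1/2}V$ rather than the paper's Schur-complement lemma), which is already the content of the paper's proof of Theorem~\ref{thm:projection_REN} and need not be repeated here.
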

\ifpreprint{
\begin{proof}
    The proof follows from the application of Theorems~\ref{thm:projection_REN} and~\ref{thm:optimality_conditions}, and Lemma~\ref{lem:tangential_interpolation_r}.
\end{proof}\fi
\medskip

\begin{remark}
Some remarks about Algorithm~\ref{Alg} are in place:
\begin{itemize}
    \item Connecting to~\cite{de1981controllability,ionescu2014families,shakib2023time}, the matrix $V$ in Step~3 of Algorithm~\ref{Alg} is of rank $\hat n$ if the pair $(A,B)$ is controllable and the pair $(\hat \Omega,\begin{bmatrix} r_1 & \ldots & r_{\hat n} \end{bmatrix})$ is observable.
    \item Algorithm~\ref{Alg} is computationally efficient because (i) it leverages the pre-trained certificate~\( P \) of the large-scale REN and (ii) the projection matrix~\( V \) can be efficiently computed using Lanczos/Arnoldi procedures~\cite{gugercin2008h_2}. 
\end{itemize}
\end{remark}

\vspace{1mm}
\section{Numerical example}
\label{sec:example}

This example considers a randomly-generated contracting and robust REN with $n=100$, $q=100$, $m=1$, and $p=1$.
The robustness parameters are $Q=-\frac{1}{\gamma}, R = \gamma$, and $S=0$, which corresponds to an incremental $\ell_2$-gain bound of $\gamma$, where $\gamma = 2$.
The goal is to reduce the REN using Algorithm~\ref{Alg} to different orders $\hat n$ ranging from $\hat n = 1$ to $\hat n = 50$.
Since Algorithm~\ref{Alg} is sensitive to the initial tangential data provided, it has been run ten times for each $\hat n$ for randomly generated initial tangential data.
The source code can be downloaded from \url{www.github.com/FahimShakib}.

Fig.~\ref{fig:iteration_history_n100} depicts the iteration history of Algorithm~\ref{Alg} for different initial tangential data for the case $\hat n =10$.
Here, the $\Ht$-error is consistently reduced within the first few iterations, and the algorithm converges to the same fixed point, satisfying the condition~\eqref{eq:necessary_condition_b}.
Since this condition is only necessary, the model with the smallest $\Ht$-error has always been backtracked and selected.
For the input depicted in the top plot of Fig.~\ref{fig:time_resp_n100}, the response~$\hat y$ of the reduced-order model is plotted in Fig.~\ref{fig:time_resp_n100} together with the response~$y$ of the large-scale REN with $n=100$.
These responses remain close to each other with the errors more than one order of magnitude smaller than the responses themselves.

\begin{figure}
    \centering
    \includegraphics[width=0.9\linewidth]{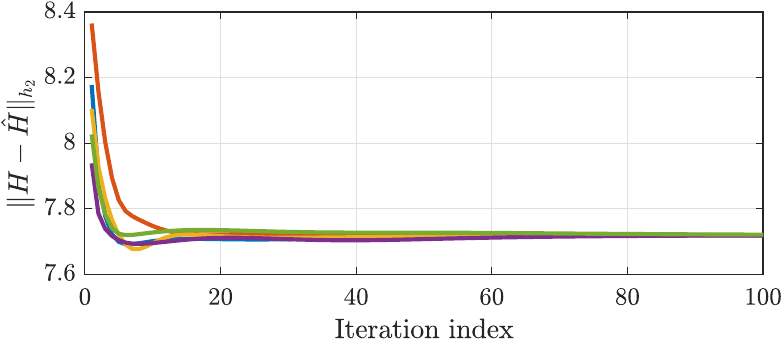}
    \caption{Typical iteration history of Algorithm~\ref{Alg}.
    Different lines correspond to different initial tangential data.}
    \vspace{-2mm}
    \label{fig:iteration_history_n100}
\end{figure}

\begin{figure}
    \centering
    \includegraphics[width=0.9\linewidth]{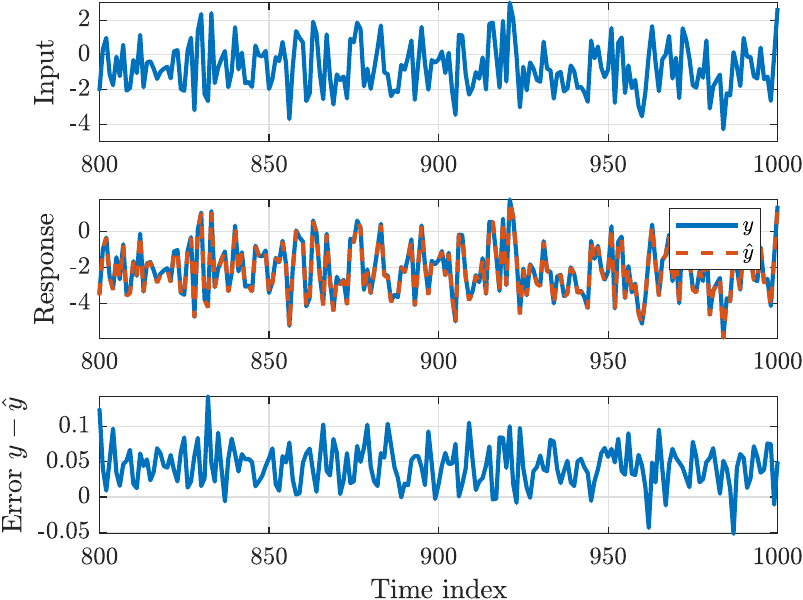}
    \caption{Top: the last 200 samples of a white noise input sequence. Middle: the response $y$ of the large-scale REN with $n=100$ and the response $\hat y$ of the reduced-order REN with $\hat n = 10$. Bottom: the error between $y$ and~$\hat y$.}
    \vspace{-0.5cm}
    \label{fig:time_resp_n100}
\end{figure}

Simulations are used to quantify the accuracy
of the reduced-order REN.
First, $n_u$ white-noise input sequences $\UU_i = \{u_t \in \RR \}_{t=1}^T, i =1,\ldots,n_u,$ are realized from normal distributions with different variances and means.
These are used consistently throughout this example for $T=1000$.
Then, for input $\UU_i$, with $i=1,\ldots,n_u$, the error measures
\begin{equation}\label{eq:def_C}
\hspace{-1mm}
\textstyle{
    \CCC_i \coloneqq \frac{\norm{\RRR_0(\UU_i)-\hat\RRR_0(\UU_i)}_T}{\norm{\RRR_0(\UU_i)}_T}\cdot 100\%, \ \  \CCC \coloneqq \frac{1}{n_u} \sum_{i=1}^{n_u} \CCC_i,
    }
\end{equation}
are computed, where $\RRR_0(\UU_i)$ is the response of the large-scale REN to input $\UU_i$ and $\hat \RRR_0(\UU_i)$ is the response of the reduced-order REN for the same input, both starting from zero initial conditions and where $\CCC$ is the average of $\{\CCC_i\}_{i=1}^{n_u}$.
For $n_u=10$, Fig.~\ref{fig:Performance_n100} depicts~$\CCC$ over~$\hat n$ together with the $h_2$-error of the LTI components of the RENs.
As expected, the $h_2$-error decreases consistently for increasing orders $\hat n$.
However, $\CCC$ does not decrease consistently which is because (i) a smaller $\Ht$-error in the LTI part does not necessarily lead to a smaller $\CCC$ as a REN is a nonlinear system, and (ii) the reduced RENs are only tested for $n_u$ inputs, while the $\Ht$-error considers any input.
In general, especially for small $\hat n$, obtaining a smaller $h_2$-error in the LTI component results in smaller errors in REN responses, typically only a few percent, even for a large reduction in order.

\begin{figure}
    \centering
    \includegraphics[width=0.9\linewidth]{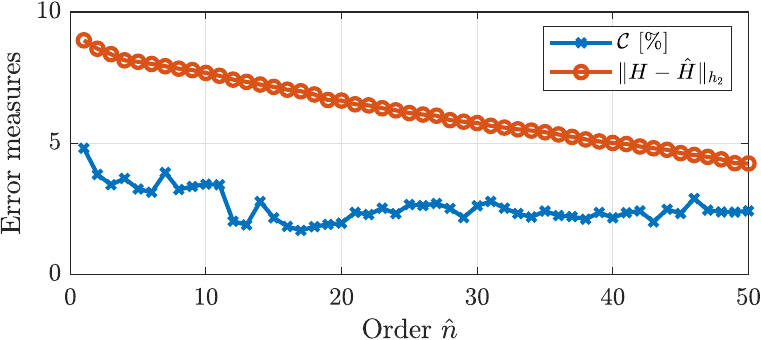}
    \caption{The error measure $\CCC$ in~\eqref{eq:def_C} and the $h_2$-error between the LTI parts of the $n$-th-order the $\hat n$-th-order RENs.}
    \vspace{-1mm}
    \label{fig:Performance_n100}
\end{figure}

Finally, the incremental $\ell_2$-gain of the reduced-order RENs is computed empirically to validate whether the bound $\gamma$ is satisfied in simulation.
Consider the following measure:
\begin{equation}\label{eq:empericalincrl2gain_large_REN}
    \textstyle{\beta (\mathcal{R}) = \max_{i,j\in \{1,\ldots,n_u\}, i\neq j}\frac{\|\mathcal{R}(\UU_i) - \mathcal{R}(\UU_j)\|_T}{\|\UU_i - \UU_j\|_T}.}
\end{equation}
Here, $\tilde \gamma \coloneqq \beta(\mathcal{R}_0)$ is a lower bound on the incremental $\ell_2$-gain of the large-scale REN, while, for each~$\hat n$, $\hat \gamma \coloneqq \beta(\hat {\mathcal{R}}_0)$ is a lower bound on the incremental $\ell_2$-gain of the reduced-order REN.
Fig.~\ref{fig:l2gain_n100} depicts $\hat \gamma, \tilde \gamma$, and $\gamma = 2$,
where the large-scale REN satisfies $\tilde \gamma \le \gamma$ with a gap between $\gamma$ and $\tilde\gamma$ because $\gamma$ is only an upper bound for the REN's actual incremental $\ell_2$-gain.
Furthermore, in line with Theorems~\ref{thm:projection_REN} and~\ref{thm:summary}, also $\hat \gamma < \gamma$ is satisfied, while it is observed that $\hat \gamma \approx \tilde \gamma$.

\begin{figure}
    \centering
    \includegraphics[width=0.9\linewidth]{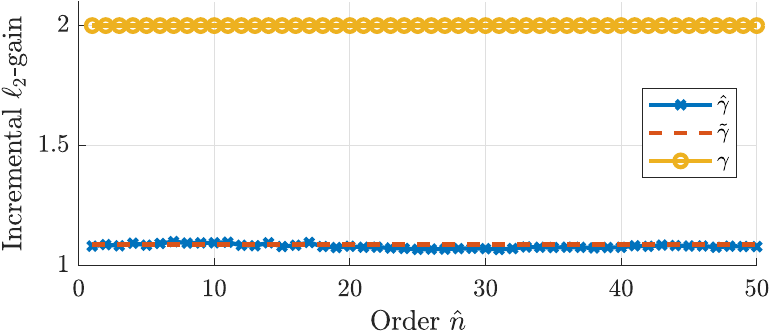}
    \caption{The estimated lower-bound $\hat \gamma$ on the incremental $\ell_2$-gain of the reduced-order RENs for various orders $\hat n$, and the analog lower-bound $\tilde \gamma$ for the large-scale REN. Both $\hat \gamma$ and $\tilde \gamma$ remain below the upper bound $\gamma=2$.}
    \vspace{-0.3cm}
    \label{fig:l2gain_n100}
\end{figure}

\vspace{1mm}

\section{CONCLUSIONS}\label{sec:conclusions}
Recurrent equilibrium networks (RENs) offer a robust framework for learning dynamical models but can be too computationally demanding for real-time use. 
This article proposes a projection-based order reduction method that preserves the REN's contraction and robustness properties during the reduction process. 
The method is based on a novel projection strategy leveraging the already learned contraction certificate and an iterative algorithm that updates the reduced-order REN to improve its accuracy.
Numerical experiments demonstrate significant order reduction, the preservation of contraction and robustness, and only little loss in accuracy.
Future research includes complexity reduction by reducing the number of neurons.

\ifpreprint{\section*{APPENDIX: Proof of Theorem~\ref{thm:projection_REN}}\label{A:proof_thm:projection_REN}
Before presenting the proof, an auxiliary lemma and its proof are presented.
\begin{lemma}\label{lem:P_ineq}
    For any $P\in \RR^{n\times n}$ such that $P=P^\top \succ 0$ and any $V \in \RR^{n\times \hat n}$ such that rank$(V)=\hat n$, the inequality
    \begin{equation}\label{eq:lemma_ineq}
        P - PV\left(V^\top P V\right)^{-1} V^\top P \succeq 0
    \end{equation}
    holds true. \hfill$\blacksquare$
\end{lemma}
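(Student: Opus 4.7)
The plan is to recognize the expression $P - PV(V^\top P V)^{-1}V^\top P$ as a Schur complement of a manifestly positive semi-definite block matrix. First, I would form
\[
M \;=\; \begin{bmatrix} I \\ V^\top \end{bmatrix} P \begin{bmatrix} I & V \end{bmatrix} \;=\; \begin{bmatrix} P & PV \\ V^\top P & V^\top P V \end{bmatrix},
\]
which is PSD because it is a congruence-type product with $P \succ 0$.

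Next, I would verify that the $(2,2)$-block $V^\top P V$ is invertible, so that the inverse appearing in~\eqref{eq:lemma_ineq} is well-defined and the Schur complement with respect to that block makes sense. This is immediate: $P \succ 0$ admits a factor $P = L L^\top$ with $L$ invertible, so $V^\top P V = (L^\top V)^\top (L^\top V)$, and since rank$(V) = \hat n$ implies rank$(L^\top V) = \hat n$, the Gram matrix $V^\top P V$ is positive definite. The standard Schur complement characterization of positive semi-definiteness then yields that $M \succeq 0$ together with $V^\top P V \succ 0$ implies its Schur complement with respect to the $(2,2)$-block,
\[
P - PV (V^\top P V)^{-1} V^\top P \;\succeq\; 0,
\]
which is precisely~\eqref{eq:lemma_ineq}.

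An equivalent route, which I find more geometric, is to substitute $P = L L^\top$ directly and set $U := L^\top V$; a short computation then gives
\[
P - PV(V^\top P V)^{-1} V^\top P \;=\; L\!\left(I - U(U^\top U)^{-1} U^\top\right)\! L^\top.
\]
The middle factor is the orthogonal projector onto the orthogonal complement of $\mathrm{col}(U)$, hence PSD, and conjugation by $L^\top$ on the right (and $L$ on the left) preserves positive semi-definiteness. There is no real obstacle in either route; the only point requiring a line of justification is the invertibility of $V^\top P V$, which follows from $P \succ 0$ and the full column rank of $V$.
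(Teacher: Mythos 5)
Your proposal is correct and its main route coincides with the paper's: both reduce \eqref{eq:lemma_ineq} via a Schur complement to positive semi-definiteness of the block matrix with blocks $P$, $PV$, $V^\top P$, $V^\top P V$, which the paper certifies through a congruence with the all-$P$ block matrix $\begin{bmatrix} P & P \\ P & P \end{bmatrix}$, whereas you certify it more directly as $\begin{bmatrix} I \\ V^\top \end{bmatrix} P \begin{bmatrix} I & V \end{bmatrix}$ and, usefully, make explicit the positive definiteness of $V^\top P V$ that the paper's Schur-complement step uses only implicitly. Your alternative projector route, writing the expression as $L\left(I - U(U^\top U)^{-1}U^\top\right)L^\top$ with $P = LL^\top$ and $U = L^\top V$, is also correct and self-contained, but it is not needed beyond the first argument.
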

\begin{proof}
    Apply a Schur complement to~\eqref{eq:lemma_ineq} to yield the equivalent inequality
    \begin{align*}
        \begin{bmatrix}
           V^\top P V & V^\top P \\ PV &P 
        \end{bmatrix} = 
        \begin{bmatrix}
           V^\top & 0 \\ 0 & I
        \end{bmatrix}
        \underbrace{\begin{bmatrix}
           P & P \\ P &P 
        \end{bmatrix}}_{\eqqcolon \tilde P \succeq 0}
        \underbrace{\begin{bmatrix}
           V & 0 \\ 0 & I
        \end{bmatrix}}_{\eqqcolon\tilde M}
        \succeq 0.
    \end{align*}
    Noting that $P\succ 0$ and taking a Schur complement of $\tilde P$ leads to $P-PP^{-1}P = P-P = 0 \succeq 0$, hence $\tilde P \succeq 0$.
    Then, inequality~\eqref{eq:lemma_ineq} holds since $\tilde P \succeq 0$ and $\tilde M$ is full column rank (because $V$ is full column rank). 
\end{proof}
\vspace{1mm}

Next, the proof of Theorem~\ref{thm:projection_REN} is presented.

\begin{proof}
Only the statement of item 2) (robustness and contraction) is proven. 
The statement of item 1) (contraction only) follows a similar reasoning and is omitted for brevity.

The conditions for robustness (as in~\eqref{eq:cond:IQC} in Theorem~\ref{thm:LMI_cond}) can be written compactly as follows:
\begin{equation}
    P=P^\top \succ 0 , \quad \LL_n \coloneqq F - G^\top P G + H^\top Q H \succ 0,
\end{equation}
where
\begin{align*}
    F &\coloneqq \begin{bmatrix}
    \bar{\alpha}^{2}P & -C_{1}^{\top}\Lambda & C_{2}^{\top}S^{\top} \\
    -\Lambda C_{1} & Y & D_{21}^{\top}S^{\top} - \Lambda D_{12} \\
    SC_{2} & SD_{21} - D_{12}^{\top}\Lambda & R + SD_{22} + D_{22}^{\top}S^{\top}
    \end{bmatrix}, \\
    G &\coloneqq 
    \begin{bmatrix}
    A &
    B_{1} &
    B_{2}
    \end{bmatrix},\quad 
    H \coloneqq     
    \begin{bmatrix}
    C_{2} &
    D_{21} &
    D_{22}
    \end{bmatrix}.
\end{align*}
For the reduced-order REN in~\eqref{eq:ROM}, these conditions are:
\begin{equation}\label{lem:condition_rom_REN}
    \hat P=\hat P^\top \succ 0 , \quad \hat F - \hat G^\top \hat P \hat G + \hat H^\top Q \hat H \succ 0,
\end{equation}
where
\begin{align*}
    \hat F &\coloneqq \begin{bmatrix}
    \bar{\alpha}^{2}\hat P & -V^\top C_{1}^{\top}\Lambda & V^\top C_{2}^{\top}S^{\top} \\
    -\Lambda C_{1}V & Y & D_{21}^{\top}S^{\top} - \Lambda D_{12} \\
    SC_{2}V & SD_{21} - D_{12}^{\top}\Lambda & R + SD_{22} + D_{22}^{\top}S^{\top}
    \end{bmatrix},\\
    \hat G &\coloneqq 
    W^\top\begin{bmatrix}
     A V &
     B_{1} &
     B_{2}
    \end{bmatrix}, \quad
    \hat H \coloneqq     
    \begin{bmatrix}
    C_{2}V &
    D_{21} &
    D_{22}
    \end{bmatrix}.
\end{align*}
Take $\hat P \coloneqq V^\top P V$ and note that~\eqref{lem:condition_rom_REN} can be written using
\begin{align*}
    \hat F &= M^\top F M, \quad \hat G = W^\top GM, \quad \hat H = HM,
\end{align*}
where $M\coloneqq \text{blkdiag}(V,I,I)$.
Then,~\eqref{lem:condition_rom_REN} is equivalent to
\begin{subequations}
\begin{align}
    V^\top P V = V^\top P^\top V &\succ 0, \\ M^\top \left( F-G^\top W V^\top P V W^\top G + H^\top Q H \right)M &\succ 0. \label{eq:lem:ineq_ROM_M}
\end{align}
\end{subequations}

First, note that $V^\top P V= V^\top P^\top V \succ 0$ is satisfied because $P = P^\top \succ 0$ and $V$ is full column rank by assumption. 
Second, if
\begin{equation}
     \hat \LL_{\hat n} \coloneqq F-G^\top W V^\top P V W^\top G + H^\top Q H \succ 0
\end{equation}
is satisfied, then~\eqref{eq:lem:ineq_ROM_M} is also satisfied because $M$ is full column rank.
From here, it is only left to show that $\LL_n \succ 0 \Rightarrow \hat \LL_{\hat n} \succ 0$, i.e., 
\begin{multline}\label{eq:lem:ineq}
    F-G^\top W V^\top P V W^\top G + H^\top Q H \succeq \\
    F-G^\top P G + H^\top Q H \succ 0,
\end{multline}
since $F-G^\top P G + H^\top Q H \succ 0$ by assumption.
Inequality~\eqref{eq:lem:ineq} is implied if
\begin{equation}\label{eq:proof:ineq1}
    -W V^\top P V W^\top \succeq - P,
\end{equation}
holds. 
Substituting $W^\top = \left(V^\top P V \right)^{-1} V^\top P$ (as in~\eqref{eq:def_W}) in~\eqref{eq:proof:ineq1} leaves
\begin{align}
    \begin{split}
        -PV\left(V^\top P V\right)^{-1} V^\top P V \left(V^\top P V\right)^{-1} V^\top P &= \\
        -PV\left(V^\top P V\right)^{-1} V^\top P
        &\succeq -P.
    \end{split}
\end{align}
The latter inequality is true by Lemma~\ref{lem:P_ineq}.
Therefore, the reduced-order REN is robust with the same $(Q,S,R)$.
The proof is completed by noting that a REN is also contractive under the robustness conditions of Theorem~\ref{thm:LMI_cond}.
\end{proof}}\fi


\balance

\bibliographystyle{ieeetr}
\bibliography{MyBIB}             

\end{document}